\newcommand{\mm}[1]{\begin{align} #1 \end{align}}
\newcommand{\ket}[1]{\ensuremath\,|#1\rangle}
\newcommand{\bra}[1]{\ensuremath\langle #1 |\,}
\newcommand{\braket}[2]{\ensuremath \langle #1 | #2 \rangle}
\newcommand{\ketbra}[2]{|#1\rangle \langle #2 |}
\newcommand{\tr}{\text{Tr}}
\renewcommand{\rho}{\varrho}
\newcommand{\ot}{\otimes}
\newcommand{\dg}{^\dagger}
\newcommand{\lb}{\left(}
\newcommand{\rb}{\right)}
\newcommand{\id}{I}
\newtheorem{lemma}[theorem]{Lemma}
\newtheorem{corollary}[theorem]{Corollary}
\address{%
$^{1}$ \quad Naturwissenschaftlich-Technische Fakult\"at, Universit\"at Siegen, Walter-Flex-Stra\ss{e} 3, 57068 Siegen, Germany\\
$^{2}$ \quad School of Mathematics and Physics, Queen's University, Belfast BT7 1NN, UK}
\abstract{Recently, the concept of daemonic ergotropy has been
introduced to quantify the maximum energy that can be
obtained from a quantum system through an ancilla-assisted work extraction protocol based on information gain via
projective measurements [G. Francica
{\it et al.}, npj Quant. Inf. {\bf 3}, 12 (2018)]. 
We prove that quantum correlations are
not advantageous over classical correlations if projective measurements
are considered. We go beyond the limitations of the original definition
to include generalised measurements and provide an example in which
this allows for a higher daemonic ergotropy. Moreover, we propose a
see-saw algorithm to find a measurement that attains the maximum work
extraction. Finally, we provide a multipartite generalisation of
daemonic ergotropy that pinpoints the influence of multipartite quantum
correlations, and study it for multipartite entangled and classical~states.}
\begin{document}


\section{Introduction}
In the rapidly evolving research arena embodied by the thermodynamics of quantum
systems, 
the resource-role of quantum features in work-extraction protocols is one of the most interesting and pressing 
open questions~\cite{review,review2,review3,review4}. Quantum coherences are claimed to be
responsible for the extraction of work from a single heat
bath~\cite{scully} and the enhanced performance of quantum engines~\cite{karimi}. Weakly driven quantum heat engines are
known to exhibit enhanced power outputs with respect to their
classical (stochastic) versions~\cite{uzdin}. Quantum information-assisted
schemes for energy extraction have been put forward and shown to be potentially
able to achieve significant
efficiencies~\cite{jordan,elouard,yi,gelbwaser,jacobs,abah}. However,
controversies in the usefulness of quantum correlations and coherences in
schemes for the extraction of work from quantum systems have also been
discussed~\cite{acin,acing,fusco,michele}. While a full physical understanding
of these issues is still far from being acquired, theoretical progress in this direction will be key to the design and implementation of informed experimental proof-of-principle experiments and thus the consolidation of a quantum approach to the thermodynamics of microscopic systems. 

Recently, a simple ancilla-assisted work-extraction protocol has been proposed
that is able to pinpoint the crucial role that quantum measurements have in the
performance of a quantum work-extraction game. This protocol also highlighted important
implications arising from the availability of quantum correlations between the
work medium and the ancilla~\cite{daem}. The scheme provided a link between enhanced work extraction capabilities and quantum
entanglement between ancilla and work medium, suggesting the possibility to exploit
entanglement as a resource. 

In this work we show that although this link exists for pure states,
quantum correlations and work extraction capabilities are unrelated if mixed
states are considered. 
However, the scheme in Reference~\cite{daem} relied on a set of very stringent
assumptions, which leave room to further investigations aimed at clarifying the
potential benefits of exploiting quantum resources. 
Here, we critically investigate the protocol in Reference~\cite{daem}, and extend it in
various directions. First, we address the class of measurements that
ensure the enhancement of the work-extraction performance. We provide an example
in which generalised measurements allow for more extracted energy than
projective measurements do. The search for the right generalised measurement
poses serious computational challenges that we solve by proposing a constructive
see-saw algorithm that is able to identify the most effective
measurement for a given state of the work medium and ancilla, and an assigned
Hamiltonian of the former. We then address the issue embodied by the interplay
between information gathered via optimal measurements and quantum correlations
shared between work medium and ancilla. We show that, depending on the nature of
the optimal measurement, quantum correlations  may become entirely inessential for the enhancement of work
extraction. Finally, we open the investigation to multipartite settings by
addressing the case of multiple work media and ancillas, showing that the
structure of correlation-sharing among the various parties of such a
system is key in the performance of our work-extraction
protocol.

Our results contribute to the ongoing research for the ultimate
resources to be exploited to draw an effective and useful framework for quantum
enhanced thermodynamical processes. While clarifying a number of important
points, our work opens up new avenues of investigation that will be crucial for the design of unambiguous experimental validations.

 
 \section{Notation and Concepts}
 
 The maximal energy decrease of a given state $\rho^S$ with respect to a
reference Hamiltonian $H$ undergoing an arbitrary unitary evolution $U$ is its
{ergotropy} \cite{ergotropy}
\begin{align}
W(\rho^S, H) = \tr[\rho^S H] - \min_U \tr[U\rho^S U\dg H]. \label{eq-1}
\end{align}

This is interpreted as the maximal amount of work that can be extracted from a system prepared in state $\rho^S$
by the means of a unitary protocol~\cite{ergotropy}.
Given some
state in its spectral decomposition $\rho^S~=~\sum_k r_k~\ketbra{r_k}{r_k}$ with $r_{k+1} \le r_k$
and a Hamiltonian $H = \sum_k \epsilon_k
\ketbra{\epsilon_k}{\epsilon_k}$ with $\epsilon_{k+1} \ge \epsilon_k$ 
 the optimal
unitary is $U~=~\sum_k \ketbra{\epsilon_k}{r_k}$ \cite{ergotropy}. {This is a direct consequence of the von Neumann trace inequality~\cite{Mirsky1975}. It states that ${\rm tr}[AB]~\le~\sum_i a_i b_i$, where $a_i \, (b_i)$ are the eigenvalues of $A \, (B)$ and $a_{i+1} \ge
a_i,\quad b_{i+1}~\ge~b_i$. Choosing $A = -U\rho^S U\dagger$ and $B = H$ and writing $\max_U {\rm tr}[-U\rho^S U\dagger H] = - \min_U {\rm tr}[U \rho^S U\dagger H]$ then shows that the bound given by the von Neumann trace inequality is achieved with the unitary stated~above.}

In Reference~\cite{daem}, an ancilla-assisted protocol allowed for enhanced work extraction by making use of a process of information inference.  
The fundamental building blocks of the protocol are embodied by the joint state of a work medium $S$ and an ancilla $A$, and a projective
measurement $M$ performed on the latter (cf. Figure~\ref{scheme}). The
information gathered through these measurements is then used to determine a unitary transformation to be applied to $S$ to extract as much work as possible.

\begin{figure}[H]
\centering
\includegraphics[width=.25\columnwidth]{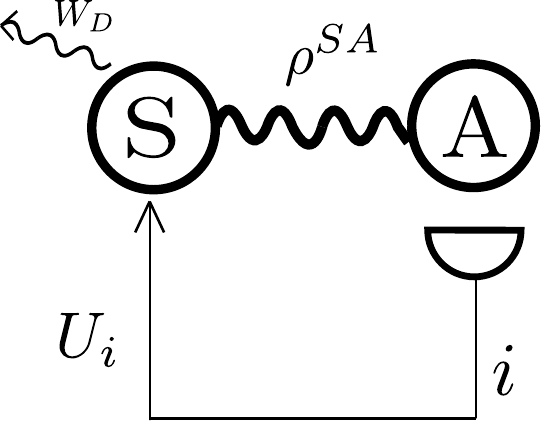}
\caption{Illustration of daemonic ergotropy. A system $S$ is coupled to an
ancilla $A$. A measurement is performed on the latter and depending on the
outcome $i$ different
unitaries can be applied to $S$ in order to extract work. The maximal amount of
extractable work using this protocol is the daemonic~ergotropy. }
\label{scheme}
\end{figure}

This work, which is dubbed {daemonic ergotropy}, is given by
\begin{equation}
\label{lab:daemerg}
W_D(\rho^{SA},H,M) = \tr[\rho^S H] - \sum_i \min_{U_i} \tr(\gamma^{S}_{i}\tilde H_i),
\end{equation} 
where $\tilde H_i=U_i\dg H U_i$, $M = \{\Pi_j\}$ is a projective measurement,
and $\gamma^S_i = \tr_A[\rho^{SA} (\id^S \ot \Pi^A_i)]$ is the unnormalised
conditional state of $S$ corresponding to the $i^\text{th}$ measurement outcome.
The daemonic ergotropy can be written in a more compact way using the ergotropy, namely 
\begin{equation}
W_D(\rho^{SA},H,M) = \sum_i W(\gamma^S_i,H).\label{lab:daem2}
\end{equation}

For a pure state, any
projective measurement $M$ with $\Pi_i$ rank-one projectors maximises the
daemonic ergotropy. In fact,
the conditional states $\gamma^{S}_{i}$ are then pure and it is always possible to find a unitary -- specific to every
conditional state -- that maps it to the ground state of the Hamiltonian, thus lowering as much as possible the energy of 
the system and extracting the maximum amount of work~\cite{daem}.

The difference between maximal daemonic ergotropy and ergotropy is called daemonic gain~\cite{daem}, and is formalised as
\mm{
\delta W(\rho^{SA},H) = \max_M W_D(\rho^{SA},H,M) - W(\rho^S,H).
}

If $\rho^{SA}$ is a pure product state, $\rho^S$ is pure. Thus, no measurement on the
ancilla is required for optimal work extraction, since in this case there is a unitary
that maps $\rho^S$ to the ground state of the Hamiltonian. Consequently, the
daemonic ergotropy coincides with the ergotropy in this case and there is no
daemonic gain.

The definitions provided above pinpoint the key role of the measurement step in such an ancilla-assisted extraction protocol. In particular, the assumption of 
projective measurements performed on $A$ appears to be too restrictive. It is
thus plausible to wonder if better performances of the daemonic work-extraction
scheme are possible when enlarging the range of possible measurements on the
ancilla to generalised quantum measurements. 

\section{Non-Optimality of Projective Measurements for Daemonic Ergotropy}
We now address such a scenario and provide an example where more energy can be
extracted from $S$ when generalised measurements are performed. 
To this end, we will employ the
formalism of positive
operator valued measures (POVMs)~\cite{heinosaari}. In the case of a finite 
 set of outcomes $\{i\}$, a POVM is a map that assigns a
positive semidefinite
operator $E_i$ -- dubbed as effect -- to each outcome $i$, such that $\sum_i E_i = \id$. As with
projective measurements, the probabilities for the outcomes are obtained as $p_i
= \tr(E_i \rho)$.
However, the effects $E_i$ of a POVM need not be projectors.

Let us consider now a three-level system $S$ and a two-level ancilla $A$ prepared in the joint state
\begin{align}
\rho^{SA} = \frac 13 \sum_{j=0}^{2} \ketbra jj \otimes \Pi\left(\frac{2\pi j}{3},0\right) \label{lab:povmbetprojst}
\end{align}
with projectors 
\begin{equation}
\Pi(\alpha,\beta) = \frac 12\{\id + \cos(\alpha) \sigma_z +
\sin(\alpha)[\cos(\beta)
\sigma_x - \sin(\beta) \sigma_y]\}.
\end{equation}

Here $(\alpha,\beta)$ are angles in the single-qubit Bloch sphere. 
We assume a reference Hamiltonian $H{=}\sum_j \epsilon_j \ketbra jj$ with energy eigenvalues $\epsilon_j$ arranged in increasing order. If only projective measurements $M$ are allowed on the state of the ancilla, the maximum daemonic ergotropy achieved upon optimizing over the measurement strategy is 
\begin{align}
\max_M W_D(\rho^{SA},H,M) = W(\rho^S,H) + \frac{\epsilon_2 - \epsilon_0}{2\sqrt 3}.
\end{align}

Details on this result are presented in Appendix~\ref{labpovmbetterproj}. However, if generalised measurements are permitted, one may choose the POVM with
effects $E_j = \frac 23 \Pi(2\pi j/3,0)$ to yield
a daemonic ergotropy~of 
\begin{align}
W_D(\rho^{SA},H,\{E_i\}) = W(\rho^S,H) + \frac 16(\epsilon_1 +
\epsilon_2-2\epsilon_0).
\end{align}

This can exceed the maximum daemonic ergotropy achieved through
projective measurements. For instance, we can assume to have shifted energy so
that $\epsilon_0 = 0$. Under such conditions, we would have
$W_D(\rho^{SA},H,\{E_i\})>\max_M W_D(\rho^{SA},H,M)$ for $ (\sqrt 3 - 1)
\epsilon_2 < \epsilon_1 \le \epsilon_2$.
Figure~\ref{fig-povmbetterproj} shows the daemonic gain $\delta W$ corresponding to the example above as a function of the value of the highest energy level of the 
Hamiltonian for projective measurements (PVMs) and POVMs. While in this example the optimal
projective measurement does not depend on the Hamiltonian, the optimal POVM
does. Therefore, the daemonic gain grows linearly with the value of the highest
energy value, as long as only projective measurements are taken into account.
For comparison, the daemonic gain that can be achieved with the previously
discussed POVM $\left(\frac 23 \Pi(2\pi j/3,0)\right)_j$ is plotted as a dashed line. 

\begin{figure}[H]
\centering
\includegraphics[width=.45\columnwidth]{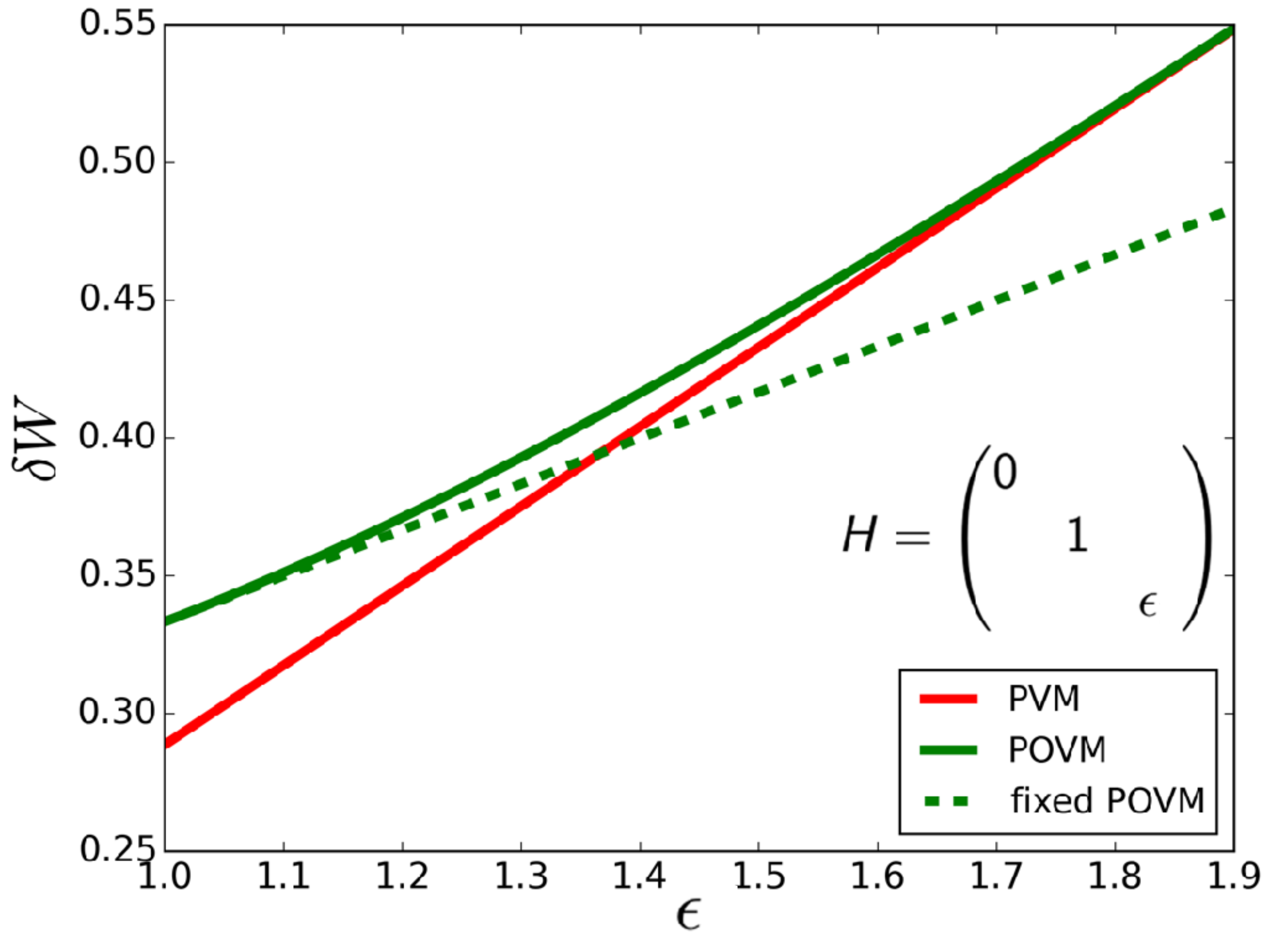}
\caption{Daemonic gain $\delta W$ as a function of the value of the highest
energy level of the Hamiltonian $H$ (in units of $\epsilon_1)$ for the state
$\rho^{SA}$ given in Equation~\eqref{lab:povmbetprojst}. Here $\epsilon=\epsilon_2/\epsilon_1$. We compare the performance under the optimal r projective measurements (PVM)
and  positive
operator valued measures (POVM). The latter was found numerically using  the see-saw algorithm proposed here. The former is 
determined analytically as discussed in Appendix~\ref{labpovmbetterproj}. The
dashed line is obtained as
the daemonic gain $\delta W$ for the fixed POVM with effects $ E_j = \frac 23
\Pi(2\pi j/3,0)$. }
\label{fig-povmbetterproj}
\end{figure}
 
 \section{Construction of Optimal POVMs}
 
 Having provided a useful example, we now move to address the problem of identifying the ideal POVM for optimal daemonic ergotropy. The following Lemma is instrumental to the achievement of our~goal:
\begin{lemma}
\label{lab:sublin}
The ergotropy is a sublinear function in its first argument, which refers to the
state. That is, for any $\gamma=\gamma_1+\gamma_2$
\mm{
    &W(\gamma,H) \le \sum_{i=1,2} W(\gamma_i,H) 
\intertext{and}
    &W(\lambda \gamma, H) = \lambda W(\gamma,H)
}
for any $\lambda \ge 0$. As ergotropy is symmetric under the
exchange of its first and the second argument, it is also sublinear in the
Hamiltonian.
\end{lemma}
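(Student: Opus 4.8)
The plan is to verify the two defining properties of sublinearity—positive homogeneity and subadditivity—directly from the definition $W(\gamma,H)=\tr[\gamma H]-\min_U\tr[U\gamma U\dg H]$, and then to obtain the statement about the Hamiltonian from a symmetry argument. Throughout, the $\gamma_i$ are treated as positive semidefinite (unnormalised) operators, so no trace-normalisation is ever used.

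First I would dispose of positive homogeneity, which is essentially free. For $\lambda\ge 0$ the scalar factors straight through both the linear trace term and the minimisation, since scaling the objective of a minimisation by a nonnegative constant scales the minimum by that same constant. Concretely $\min_U\tr[U(\lambda\gamma)U\dg H]=\lambda\min_U\tr[U\gamma U\dg H]$, and combined with $\tr[\lambda\gamma H]=\lambda\tr[\gamma H]$ this yields $W(\lambda\gamma,H)=\lambda W(\gamma,H)$.

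The substance is subadditivity, and the key observation is that the minimisation in the extracted-energy term ranges over a single unitary that must act simultaneously on $\gamma_1$ and $\gamma_2$, whereas each summand $W(\gamma_i,H)$ is allowed its own optimal unitary. So I would split the objective for an arbitrary $U$ as $\tr[U(\gamma_1+\gamma_2)U\dg H]=\tr[U\gamma_1 U\dg H]+\tr[U\gamma_2 U\dg H]$ and bound each term below by its own minimum, giving
\mm{
\min_U\tr[U(\gamma_1+\gamma_2)U\dg H] \ge \min_{U_1}\tr[U_1\gamma_1 U_1\dg H]+\min_{U_2}\tr[U_2\gamma_2 U_2\dg H].
}
Subtracting this from $\tr[(\gamma_1+\gamma_2)H]=\tr[\gamma_1 H]+\tr[\gamma_2 H]$ and reversing the inequality, because the minimised term enters with a minus sign, produces $W(\gamma_1+\gamma_2,H)\le W(\gamma_1,H)+W(\gamma_2,H)$ at once. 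I do not expect a genuine obstacle here; the only point requiring care is keeping track of the direction of the inequality after the sign flip.

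Finally, for sublinearity in the Hamiltonian I would argue that $W$ is symmetric, i.e. $W(\rho,H)=W(H,\rho)$. The term $\tr[\rho H]$ is manifestly symmetric, and the von Neumann trace inequality already invoked in the excerpt shows that $\min_U\tr[U\rho U\dg H]$ equals the sum of the products of the eigenvalues of $\rho$ and of $H$ taken in opposite orders. This value depends only on the two spectra and is invariant under interchanging the operators, so $\min_U\tr[U\rho U\dg H]=\min_U\tr[UHU\dg\rho]$ and hence $W(\rho,H)=W(H,\rho)$. The homogeneity and subadditivity just established for the first argument therefore transfer verbatim to the second, which is precisely the closing claim of the statement.
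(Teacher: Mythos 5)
Your proof is correct and follows essentially the same route as the paper: homogeneity is immediate, and subadditivity comes from bounding the single joint minimisation $\min_U\tr[U(\gamma_1+\gamma_2)U\dg H]$ from below by the sum of the two separate minima. The only difference is that you also justify the symmetry $W(\rho,H)=W(H,\rho)$ via the von Neumann trace inequality, a claim the paper states without proof, which is a welcome addition.
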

\begin{proof}
The second equation holds trivially, which justifies our use of unnormalised
states. We obtain the first inequality as follows
\begin{align}
W(\gamma,H) &= \tr(\gamma  H) - \min_U\tr[U\gamma U\dg H]\nonumber \notag\\
&\le \sum_{j=1,2}\left[\tr(\gamma_j H) -  \min_U \tr(U \gamma_j U\dg H)\right]\notag\\\label{eq-ergsl1}
&=  \sum_{j=1,2}W(\gamma_j,H). 
\end{align}
\end{proof}
Note that sublinearity implies convexity, i.e., ${W[\lambda \gamma_1 +
(1-\lambda)\gamma_2,H] \le \lambda W(\gamma_1,H) + (1-\lambda)W(\gamma_2,H)}$.
This result allows us to state the following corollary:
\begin{corollary}
The daemonic ergotropy 
\mm{
 W_D(\rho^{SA},H,M) =& \sum_i W(\gamma^S_i,H) \ge  W(\sum_i \gamma^S_i,H) = W(\rho^S,H)}
is larger or equal to ergotropy. Equality holds for the trivial measurement,
with the identity as only effect.
\end{corollary}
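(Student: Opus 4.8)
The plan is to combine the sublinearity of the ergotropy established in Lemma~\ref{lab:sublin} with the completeness of the measurement. First I would verify the rightmost equality, namely that the unnormalised conditional states sum to the reduced state $\rho^S$. Using $\gamma^S_i = \tr_A[\rho^{SA}(\id^S \ot E_i)]$ together with the completeness relation $\sum_i E_i = \id^A$ (which holds both for a projective measurement and for a general POVM), linearity of the partial trace gives
\mmnn{
\sum_i \gamma^S_i = \tr_A\Big[\rho^{SA}\Big(\id^S \ot \sum_i E_i\Big)\Big] = \tr_A[\rho^{SA}] = \rho^S.
}
This is the key identity: whatever the measurement, discarding the outcome information recovers $\rho^S$.

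Next I would apply Lemma~\ref{lab:sublin}. Since the Lemma is phrased for a sum of two operators, to accommodate an arbitrary finite number of outcomes I would promote it by induction: the two-term sublinearity gives $W(\gamma_1 + \cdots + \gamma_n, H) \le W(\gamma_1, H) + W(\gamma_2 + \cdots + \gamma_n, H)$, and iterating yields $W(\sum_i \gamma_i, H) \le \sum_i W(\gamma_i, H)$ for any finite collection of positive operators. Applying this to the conditional states and invoking Equation~\eqref{lab:daem2} together with the identity above,
\mmnn{
W_D(\rho^{SA},H,M) = \sum_i W(\gamma^S_i,H) \ge W\Big(\sum_i \gamma^S_i, H\Big) = W(\rho^S,H),
}
which is exactly the claimed inequality.

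Finally, for the equality statement I would note that the trivial measurement has a single effect $E_1 = \id^A$, so the sum over outcomes collapses to one term, $\gamma^S_1 = \tr_A[\rho^{SA}] = \rho^S$, and hence $W_D(\rho^{SA},H,M) = W(\rho^S, H)$ with no slack.

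I do not anticipate a genuine obstacle, as the result is essentially immediate from the Lemma. The only point requiring mild care is the inductive promotion of sublinearity from two summands to an arbitrary finite number: one must check that each partial sum $\sum_{j \ge k} \gamma^S_j$ remains an admissible (positive semidefinite) argument of the ergotropy, which it is, since every $\gamma^S_i$ is the partial trace of the positive operator $\rho^{SA}$ against the positive effect $\id^S \ot E_i$ and is therefore itself positive semidefinite.
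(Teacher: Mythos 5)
Your proposal is correct and follows exactly the paper's route: the inequality is the sublinearity of ergotropy (Lemma~\ref{lab:sublin}) applied to the decomposition $\rho^S = \sum_i \gamma^S_i$, which holds by completeness of the measurement, and equality for the trivial measurement is immediate since its single conditional state is $\rho^S$ itself. The only additions you make—the inductive extension of sublinearity from two summands to finitely many, and the positivity check on the $\gamma^S_i$—are details the paper leaves implicit, not a different argument.
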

This claim has already been proven in a
different way in Reference~\cite{daem}.
A second interesting consequence of the sublinearity of ergotropy is stated in the following
lemma:
\begin{lemma}
Daemonic ergotropy is a convex function of its third argument, which pertains
 to the measurement strategy.
\end{lemma}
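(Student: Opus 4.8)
The plan is to read the statement off the compact form $W_D(\rho^{SA},H,M)=\sum_i W(\gamma^S_i,H)$ together with the convexity of ergotropy already noted immediately after Lemma~\ref{lab:sublin}. First I would pin down what a convex combination of measurements means, since this is what makes the statement well posed: given two POVMs $M^{(1)}=\{E_i^{(1)}\}$ and $M^{(2)}=\{E_i^{(2)}\}$ sharing the same outcome set $\{i\}$, and a weight $\lambda\in[0,1]$, the collection $M=\{\lambda E_i^{(1)}+(1-\lambda)E_i^{(2)}\}$ is again a valid POVM, because each effect stays positive semidefinite and $\sum_i[\lambda E_i^{(1)}+(1-\lambda)E_i^{(2)}]=\lambda\id+(1-\lambda)\id=\id$. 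This equips the set of measurements with a fixed outcome set with a convex structure, so that asking for convexity of $W_D$ in its third argument is meaningful.

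The decisive observation is that the conditional states depend linearly on the effects. Using $\gamma^S_i=\tr_A[\rho^{SA}(\id^S\ot E_i)]$ and linearity of the partial trace, the convex-combined measurement yields
\[
\gamma^S_i=\lambda\,\tr_A[\rho^{SA}(\id^S\ot E_i^{(1)})]+(1-\lambda)\,\tr_A[\rho^{SA}(\id^S\ot E_i^{(2)})]=\lambda\,\gamma^{S,(1)}_i+(1-\lambda)\,\gamma^{S,(2)}_i,
\]
that is, each unnormalised conditional state of $S$ is the corresponding convex combination of the conditional states produced by $M^{(1)}$ and $M^{(2)}$.

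I would then apply, outcome by outcome, the convexity of ergotropy recorded after Lemma~\ref{lab:sublin}, giving $W(\gamma^S_i,H)\le\lambda W(\gamma^{S,(1)}_i,H)+(1-\lambda)W(\gamma^{S,(2)}_i,H)$, and sum over $i$ to obtain
\[
W_D(\rho^{SA},H,M)\le\lambda\,W_D(\rho^{SA},H,M^{(1)})+(1-\lambda)\,W_D(\rho^{SA},H,M^{(2)}),
\]
which is exactly the claimed convexity. I do not expect a genuine obstacle here: once the linearity of the conditional states in the effects is noted, the result is a one-line consequence of the preceding lemma. The only point requiring care is conceptual rather than technical, namely fixing the convex structure on measurements (common outcome labels, effects combined linearly) so that the statement has content; the analytic work is entirely delegated to the convexity of ergotropy established above.
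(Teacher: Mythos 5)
Your proposal is correct and follows essentially the same route as the paper: both identify the conditional states of the mixed measurement $Q=\lambda M+(1-\lambda)N$ as the corresponding convex combinations of the conditional states of $M$ and $N$, and then invoke the convexity of ergotropy implied by Lemma~\ref{lab:sublin}. Your write-up is merely more explicit about the convex structure on measurements and the POVM validity check, which the paper leaves implicit.
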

\begin{proof}
Let us consider a mixed measurement strategy $Q=\lambda M + (1-\lambda)N$ with $0
\le \lambda\le 1$, and the corresponding daemonic ergotropy. We have
\begin{align}
    W_D[\rho^{SA},H,Q] &\le  \lambda\sum_i  W[\tr_A(\rho^{SA} \id \ot M_i),H]+ (1-\lambda) \sum_i W[\tr_A(\rho^{SA} \id \ot N_i),H]\notag\\
&{=} \lambda W_D(\rho^{SA},H,M){+} (1 {-} \lambda) W_D(\rho^{SA},H,N).
\end{align}
\end{proof}
We complete our formal analysis that precedes the presentation of an algorithm for the identification of the optimal POVM  with the following theorem.
\begin{Theorem}
\label{lab:thd2}
For any state $\rho^{SA}$ and any POVM $M$, one can find a POVM $\widetilde M$ with at
most $d^2$ effects, where $d$ is the dimension of the ancilla,
such that
\mm{
W_D(\rho^{SA},H,M) = W_D(\rho^{SA},H,\widetilde M).
}
\end{Theorem}
\begin{proof}
The set of POVMs on a $d$ dimensional system is convex and it has been shown
that the extremal points of this set are POVMs with at most $d^2$ effects~\cite{chiribella}. A convex function that is defined on a convex domain takes
its maximum on an extremal point. Therefore, there is an extremal POVM $E$ with
$n$ outcomes, $1 \le n \le d^2$, that
exhibits a daemonic ergotropy that is larger than or equal to the daemonic
ergotropy for $M$. If equality holds,
we choose $\widetilde M = E$. Otherwise, there is a mixture $\widetilde M =
\lambda E + (1-\lambda) I$
between $E$ and a trivial
random measurement $I$ with $n$ outcomes and effects $I_i = \id/n$ that
meets the requirement, since $W_D(\rho^{SA},H,I) = W(\rho^S,H) \le
W(\rho^{SA},H,M)$.
\end{proof}

We are now in the position to present an algorithm for the search of the optimal
measurement.
This task involves two parts (a) Finding the optimal
measurement and (b) Finding the optimal unitaries to calculate the ergotropies of
the conditional states. Assume a fixed measurement. Then, the conditional states
are fixed and one can find the optimal unitaries as discussed in the
introduction after Equation~(\ref{eq-1}). On the other hand, if some $d^2$ unitaries $U_i$ are given,
finding the optimal measurement $M = (E_i)_i$ is a semidefinite program (SDP)~\cite{roope}
\begin{align}
\min_{M}& \sum_i \tr(\tau_i E_i)\nonumber\\\notag
\text{s.t}\, &  
\sum_i E_i = \id\\ &
 E_i \ge 0 
\end{align}
where $E_i$ are the effects associated with the POVM $M$ and 
\mm{
\tau_i =
\tr_S(\rho^{SA} U_i\dg
H U_i).\label{lab:sigmai}}

We thus propose the following see-saw {Algorithm \ref{algorithm}}:
\begin{algorithm}[H]
\begin{spacing}{1.5}
\caption{Optimise POVM for daemonic ergotropy} 
\label{algorithm} 
		\begin{algorithmic}[1]%
		\State {Choose $n$ different unitaries $U_i$ and calculate $\tau_i$}
		\State {Solve the SDP above. This will yield a POVM $M$.}
		\State{Calculate the conditional states $\gamma^S_i$ for the POVM $M$ and the optimal unitaries $U_i$.}
	\Repeat  \Comment{Iterate steps 2 and 3}  
	\Until {convergence.}
\end{algorithmic}
\end{spacing} 
\end{algorithm}

We can restrict ourselves to $n = d^2$ different unitaries in the first step
because of Theorem \ref{lab:thd2}.
Calculating the daemonic ergotropy after every round of the algorithm will yield a
monotonically increasing sequence that is bounded from above because all involved
operators are bounded and will therefore converge. In the case of the example
discussed above, roughly 10 iterations are needed until the limit is
reached within numerical precision. The sequence however sometimes converges to
a local maximum that is strictly smaller than the
maximal daemonic ergotropy. Besides observing
this in practice, we also construct such a case in Appendix~\ref{labbadfixapp}.
 
 \section{The Role of Quantum Correlations}
 
 Notwithstanding the handiness of the algorithm built above, analytical solutions can be found in some physically relevant cases. The one most pertinent to the scopes of this work~\cite{daem} is embodied by quantum-classical $S$-$A$ states, i.e., states that can be cast in the form 
\mm{
\rho^{SA}_{qc} = \sum_j \sigma^S_j \ot \ketbra jj^A \label{eq-qc}
}
with $\{\ket j^A\}$ a set of orthonormal vectors and $\sigma^S_j$ unnormalised states.
This class of states has attracted attention from the community interested in
the characterization of general quantum correlations, for it has only classical
correlations, that is, it is not entangled and exhibits no quantum discord, if $A$ is
considered as the system the measurement being performed on~\cite{zurekdiscord,hendersonvedral,Modi,navascues}. 
For these states, we provide the following theorem. The proof is found in
Appendix~\ref{proofQC}.
\begin{Theorem}
\label{theoQC}
For a quantum-classical state
$\rho^{SA}_{qc}$, the maximum daemonic ergotropy is
\mm{
\max_M W_D(\rho^{SA},H,M) = \sum_j W(\sigma_j^S,H).
}
This value is achieved by performing the projective measurement with effects $P_j = \ketbra jj^A$ $(j =1,\ldots,d)$ on the ancilla $A$.
\end{Theorem}

This shows that, in the case of a quantum-classical state, we have an analytic form for the daemonic gain. 
To calculate it, we should diagonalise the reduced state $\rho^A = \tr_S({\rho^{SA}})$ of the ancilla. This  
yields a unitary
to make the state block-diagonal. The individual blocks are 
then the optimal conditional states $\sigma^S_j$ that one needs in order to compute the daemonic gain.

The above result paves the way
to an investigation on the role that quantum correlations play in the daemonic
protocol for work extraction. This important question was already partially
addressed in Reference~\cite{daem}, where a very close relation between daemonic gain
and entanglement in pure $S$-$A$ states was pointed out, while the link was shown to be looser for the case of mixed resource states. 

Here, by using the results reported above, we shed further light on the link between daemonic gain and quantum correlations. 
Let us assume that, for a given resource state $\rho^{SA}$, the optimal
measurement for daemonic gain is projective, and call $P_i=\ketbra ii$ the
corresponding projections, which can be chosen, without loss of generality, to
be rank one. We write the resource state as
\mm{
\rho^{SA} = \sum^S_{ij} \sigma^S_{ij} \ot \ketbra ij^A,
}
where the dyads $\ketbra ij^A$ are written in the basis defined by the optimal
projectors $P_i$ above. We notice that all off-block-diagonal terms $\sigma^S_{ij}$ (with $i\neq j$)
do not contribute to the daemonic gain, which is thus the same as the one associated with the
quantum-classical state 
\mm{
\rho^{SA}_{qc} = \sum_i \sigma^S_{ii} \ot \ketbra ii^A.
}

That this state is a quantum-classical state is obvious from the definition
provided in Equation~(\ref{eq-qc}).
This state can be produced by performing the optimal measurement and preparing
a pure state on the ancilla accordingly. This procedure destroys all the
quantum correlations, while the daemonic gain remains
unchanged. Quantum correlations in the resource states are
thus not useful, if the optimal measurement is projective.
This is especially true if only projective measurements are considered from
the start, which stresses the importance of considering generalised
measurements, if one aims at investigating the impact entanglement may have
on daemonic ergotropy.

However, we now show that, even if we allow for the use of arbitrary POVMs, the maximum
daemonic gain for any given Hamiltonian can be achieved by classical-classical states, i.e., states whose parties share only classical correlations~\cite{Modi}. 
We do this by providing an upper bound on the daemonic gain. This bound is tight as it is achieved by
a classical-classical state.  
Let us consider an explicit formula for
daemonic gain, where we have inserted the definitions of ergotropy and daemonic ergotropy. We have
\mm{
\delta W(\rho^{SA},H) &= \min_U \tr(U \rho^S U\dg H) - \min_{(E_k)}\min_{U_k}
\sum_k \tr(U_k \rho^S_k U_k\dg H).}

Using von Neumann's trace inequality, which reads $\tr{(AB)} \le \sum_i a_i b_i$
with $a_i (b_i)$ the eigenvalues of A (B) in increasing order, one easily finds that the first term never
exceeds $\frac 1{d_S} \tr(H)$, where $d_S$ is the dimension of the Hilbert space of $S$.
This value is attained if $\rho^S$ is maximally mixed.
The smallest value that the second term can take is
$\epsilon_0$, the lowest energy eigenvalue. This is achieved for pure
conditional states $\rho^S_k$. Consequently
\mm{
\delta W(\rho^{SA},H) \le \frac 1{d_S} \tr(H) - \epsilon_0.
}

If the dimension of the ancilla $d_A$ is greater or equal to $d_S$, this value is attained  by using -- among others -- the classical-classical state
\mm{
\rho^{SA} = \frac 1{d_S} \sum_{i=1}^{d_S} \ketbra{s_i}{s_i}^S \ot \ketbra{a_i}{a_i}^A
} 
 and the projective measurement with effects
$\ketbra{a_i}{a_i}^A$, where $\{\ket{a_i}^A\}$ ($\{\ket{s_i}^S\}$) forms an
orthogonal basis of $A$ ($S$). In the above example, the bound is also
achievable with maximally entangled pure~states
\mm{
\ket{\Psi^{SA}} = \frac 1{\sqrt{d_S}} \sum_{i=1}^{d_S} \ket{s_i}^S\ket{a_i}^A.
}

The maximal daemonic gain is, however, not always achieved using pure states,
as the following example shows.
Consider the following classical-classical state with a qutrit system and a qubit ancilla
\begin{equation}
\rho^{SA} = \frac 13[ \ketbra 00^S\ot\ketbra 00^A + (\ketbra 11^S + \ketbra
22^S) \ot \ketbra 11^A]. \label{lab:eqcc}
\end{equation}

For a Hamiltonian with eigenvalues $\epsilon_0 \le \epsilon_1 \le \epsilon_2$
one easily finds the daemonic gain $\delta W(\rho) = (\epsilon_2 - \epsilon_0)/3$.
On the other hand, for any pure
state, including maximally entangled states, we have
\mm{
\delta W(\ket{\Psi}^{SA}) \le \frac 12(\epsilon_1 - \epsilon_0),}
since the Schmidt-rank of a pure state on a $3\times 2$ dimensional system is at
most $2$.
For a suitably chosen
Hamiltonian, such as $H/\epsilon_1 = \ketbra 11 + \epsilon \ketbra 22$, with
$\epsilon=\epsilon_2/\epsilon_1 > 3/2$, the daemonic gain of
$\rho^{SA}$ [Equation~(\ref{lab:eqcc})] exceeds the daemonic gain of any pure state of the
same system.

\section{Multipartite Daemonic Ergotropy}

In this section we want to investigate a multipartite adaptation of the daemonic ergotropy protocol. Concretely, we consider the situation in which
$N$ different parties $i \in \{1, ..., N\}$ each own one system $S_i$, whose energy they can locally measure using their local Hamiltonian $H^{(i)}$. The energy of all systems combined will then be evaluated using the Hamiltonian

\begin{align}
H = \sum_{i=1}^N H^{(i)}.
\end{align}

Additionally, they can only act on their systems locally, that is using local unitaries. It is only this restriction that makes the protocol multipartite regarding the systems. If arbitrary global unitaries were admitted, this would be equivalent to a situation with a single system consisting of $N$ subsystems.

We also take the case into account in which there are $M$ ancillas, each owned by a different party $k \in \{1,...,M\}$. As we are interested in a genuinely multipartite protocol, each party must resort to local measurements, possibly assisted by classical communication among the parties, yielding outcomes $j_k$. After all outcomes are obtained, they are publicly announced and every party $i$ performs a unitary on their system $S_i$, which may depend on all the outcomes $\vec j = (j_k)_{k=1}^M$. We define the multipartite daemonic ergotropy $W_D^{\text{mult}}$ to be the maximum amount of energy that can be extracted from a state in this way.

Note that, in spite of the previously imposed restrictions, our notion of multipartite daemonic ergotropy is in fact a generalisation of daemonic ergotropy. This might appear paradoxical at first glance. However, the daemonic ergotropy protocol is equivalent to the protocol of multipartite daemonic ergotropy for one system and one ancilla. This especially includes scenarios in which system and ancilla comprise several subsystems. Studying multipartite daemonic ergotropy is interesting, because it is also applicable to settings, in which the implementation of global measurements and unitaries are unfeasible.


As we are only concerned with local measurements, possibly assisted by classical
communication among the parties, all effects of a POVM are
of the form
\mm{
E_{\vec j} = \bigotimes_{k=1}^M E^k_{j_k}.
}

We denote the respective conditional states of all systems by $\rho^S_{\vec j}
= \tr_{(A_1 ... A_M)}{(\rho^{S_1...S_N A_1 ... A_M} E_{\vec j})}$ and the conditional
state of system $S_i$ given a measurement outcome $\vec j$ as $\rho^i_{\vec j}$.
As before, the multipartite daemonic ergotropy can then be
expressed in terms of the ergotropy as
\mm{
W^{\text{mult}}_D(\rho^{\{S_j\},\{A_k\}},H,E) = \sum_{\vec j} \sum_{i=1}^N W(\rho^i_{\vec
j}, H^{(i)}).}

With this result, we can show that contrary to the bipartite case [cf.
discussions after Equation~(\ref{lab:daem2})] in the multipartite setting projective measurements are in general not
optimal for work extraction even for pure states.
In order to see this, consider a state $\rho^{S_1 A}$ and a purification $\ket{\psi}^{S_1
S_2 A}$, with $\rho^{S_1 A} = \tr_{S_2}(\ketbra{\psi}{\psi}^{S_1 S_2 A})$. If we now
assume that system $S_2$ is equipped with a local Hamiltonian $H^{(2)} = h \id$,
where $h$ is a constant, the multipartite daemonic ergotropy of the purified
state is
\begin{equation}
\begin{aligned}
W^{\text{mult}}_D(\ket{\psi}^{S_1 S_2 A},H,E) &=\sum_{\vec j} \left[W(\rho^1_{\vec
j},H^{(1)}) + W(\rho^{2}_{\vec j},H^{(2)})\right]\\
& = \sum_{\vec j} W(\rho^1_{\vec j},H^{(1)})\\
& = W_D(\rho^{S_1A},H^{(1)},E).
\end{aligned}
\end{equation}

This result stems from the fact that $H^{(2)}$ is completely
degenerate and the ergotropy vanishes for such Hamiltonians.
Thus, also the multipartite daemonic ergotropy of the purification
is maximised for the same POVM that also maximises the daemonic ergotropy of
$\rho^{SA}$. Hence, the purification of the qutrit-qubit state stated in
Equation~(\ref{lab:povmbetprojst}) is an
example for a pure state that requires a POVM to maximise the multipartite daemonic
ergotropy.
Note, however, that there are also states for which projective measurements are optimal
independently of the choice of the Hamiltonian. The first example are states
that possess a Schmidt decomposition \cite{peres}, i.e.,
\mm{
\ket{\Psi} = \sum_i \sqrt{\lambda_i} \ket{i_{S_1} \ldots i_{S_n} i_{A_1} \ldots
i_{A_m}},} with $\braket{i_{S_l}}{j_{S_l}} = \braket{i_{A_l}}{j_{A_l}} =
\delta_{ij} \forall i,j,l$.
For qubits, these are exactly the states that become separable as
soon as one particle is ignored \cite{neven}. A famous example is the $m$-partite
Greenberger--Horne--Zeilinger (GHZ)~state
\mm{
\ket{GHZ} = \frac 1{\sqrt 2}(&\ket{0_{S_1}\ldots 0_{S_n} 0_{A_1} \ldots
0_{A_m}}+ \ket{1_{S_1}\ldots 1_{S_n} 1_{A_1} \ldots 1_{A_m}}),
}
for which the local projective measurements on $\ket 0$ and $\ket 1$ are optimal,
since the conditional state of all systems is a pure product state independently
of the outcome and its
energy can thus be minimised using local unitaries.

A second class of states for which projective measurements are always optimal
are multipartite quantum-classical states
\mm{
\rho_{S_1 \ldots S_n A} = \sum_i \sigma^{S_1 \ldots S_n}_i \ot \ketbra ii^A.
}
Here, we can recover the proof of Theorem 5 to show that the
projective measurement with projectors $\ketbra ii$ is  optimal. The only adaptation to the proof
is that the unitaries are now required to be products. Of course this result is
still true in the special case when the ancilla is made up of several parties,
such that the state can be written as
\mm{
\rho^{\{S_j\} \ldots \{A_m\}} = \sum_i \sigma_i^{S_1 \ldots S_n} \ot
\ketbra ii^{A_1} \ot \ldots \ketbra ii^{A_m}.
}

In this case, the optimal measurement 
consists of the local projective measurements
with effects $\ketbra ii_{A_k}$.
 
\section{Conclusions}

We have significantly extended the concept of daemonic
ergotropy to situations involving POVM-based information-gain processes,
demonstrating that, in general, one should expect an advantage coming from the
use of generalised quantum measurements in ancilla-assisted work-extraction
schemes. While the optimal generalised measurements can be identified
analytically in some restricted -- yet physically relevant -- cases, we have
proposed an SDP-based see-saw algorithm for their construction.  This has led to
a number of results shedding light on previously unreported issues linked to
daemonic approaches to quantum work extraction: while the interplay between
quantum correlations and the features of the optimal measurements appears to be
intricate, the structure of entanglement sharing in a multipartite scenario where only local unitaries and POVMs are used turns out to be key in the performance of ancilla-assisted work extraction. 

Our work paves the way to a number of interesting developments aimed at
exploring further and clarifying the relation between quantum features and
work-extraction games in quantum scenarios. On the one hand, it will be very
interesting to further compare, quantitatively, the performance of daemonic protocols under optimal PVMs and POVMs to ascertain the extents of the benefits induced by the latter class of measurements against the difficulty of practically implement them. On the other hand, the analysis that we have reported here leaves room to the in-depth assessment of multipartite daemonic gain against the structure of multipartite entanglement aimed at the identification of potentially {\it optimal} classes of multipartite entangled states, when gauged against their role as a resource in work-extraction schemes. 

\vspace{6pt} 




\funding{M.P. acknowledges support by the SFI-DfE Investigator
Programme (grant 15/IA/2864), the H2020 Collaborative Project TEQ (Grant
Agreement 766900), the Leverhulme Trust Research Project Grant UltraQuTe (grant
nr. RGP-2018-266) and the Royal Society Wolfson Fellowship (RSWF\textbackslash
R3\textbackslash183013). O.G. acknowledges support by the DFG and the ERC
(Consolidator Grant 683107/TempQ).}

\acknowledgments{F.B. thanks the Centre for Theoretical Atomic, Molecular
and Optical Physics for hospitality while developing on part of this
work as well as the Studienstiftung des Deutschen Volkes e.V. and the House of
Young Talents Siegen.}


\appendixtitles{yes} 
\appendix
\appendixsections{multiple}
\section{POVM Advantage in Qutrit-Qubit Example}
 \label{labpovmbetterproj}
 We present the state
\begin{align}
\rho^{SA} = \frac 13 \sum_{j=0}^{2} \ketbra jj \otimes P_j \\ 
\intertext{with}
P_j = \Pi\left(\frac{2\pi j}{3},0\right) \label{eq-povm-pi}
\end{align}
and
\begin{equation}
\Pi(\alpha,\beta) = \frac 12\{\id + \cos(\alpha) \sigma_z +
\sin(\alpha)[\cos(\beta)
\sigma_x - \sin(\beta) \sigma_y]\}
\end{equation}
as an example in which higher daemonic ergotropy can be achieved with POVMs
compared to projective measurements, if a Hamiltonian is chosen suitably.
Here, we work out the details and show all necessary calculations explicitely.
First, we find the optimal projective measurements. It turns out, that they can
be found independently of the chosen Hamiltonian. With this result and bearing
in mind that the daemonic gain is invariant under unitary transformations of the
Hamiltonian, we can then compute the daemonic ergotropy as a function of
the energy spectrum.

Since the ancilla is a qubit, there are only two types of projective
measurements: Either, the projective measurement has one outcome that is
obtained with certainty, which makes the measurement trivial, or the measurement
has two outcomes. In the latter case, the effects are rank one. Therefore, we can compute the maximal daemonic
gain for projective measurements by computing it for the measurement $\mathbf
\Pi = \lb\Pi(\alpha,\beta), \Pi(\alpha + \pi, \beta)\rb$
and optimise over the angles $\alpha$ and $\beta$ afterwards. We have
\mm{
   \rho^S = &\frac 13(\ketbra 00 + \ketbra 11 + \ketbra 22), \notag\\
   \rho^S_\alpha = &\tr[\rho^{SA}(\id \otimes \Pi(\alpha,\beta))] \notag\\
   = & \frac 13 \{\ketbra 00 \tr[P_0 \Pi(\alpha,\beta)] + \ketbra 11 \tr[P_1 \Pi(\alpha,\beta)]+ \ketbra 22 \tr[P_2 \Pi(\alpha,\beta)]\}\nonumber\\
   = & \frac 13\left[\ketbra 00 \frac 12(1 + \cos(\alpha)) \right.+ \ketbra 11 \left(\frac 12 - \frac 14 \cos(\alpha) + \frac{\sqrt 3}4 \sin(\alpha) \cos(\beta)\right)\nonumber\\
  +& \ketbra 22 \left(\frac 12 - \frac 14 \cos(\alpha) - \frac{\sqrt 3}4
\sin(\alpha) \cos(\beta)\right)\Bigg], \notag}
\mm{
    \rho^S_{\alpha + \pi}
   = &\tr[\rho_{SA}(\id \otimes \Pi(\alpha + \pi,\beta))]\notag \\
     = & \frac 13\left[\ketbra 00 \frac 12(1 - \cos(\alpha)) \right.+ \ketbra 11 \left(\frac 12 + \frac 14 \cos(\alpha) - \frac{\sqrt 3}4 \sin(\alpha) \cos(\beta)\right)\nonumber\\
  +& \ketbra 22 \left(\frac 12 + \frac 14 \cos(\alpha) + \frac{\sqrt 3}4
\sin(\alpha) \cos(\beta)\right)\Bigg] . 
   }

From the definition of ergotropy one can easily see that the ergotropy of the
conditional states $\gamma^S_{\alpha}$
and $\gamma^S_{\alpha+\pi}$ will be maximal for $\cos(\beta) = \pm 1$. This becomes
clear when considering a state
\mm{
\rho = a \ketbra 00 + (b+c) \ketbra 11 +  (b-c) \ketbra 22,
}
where $a,b,c \in \mathbb R$ and $c\ge 0$. Let the Hamiltonian be
\mm{
H = \epsilon_0\ketbra {\epsilon_0}{\epsilon_0}  + \epsilon_1 \ketbra{\epsilon_1}{\epsilon_1} + \epsilon_2 \ketbra{\epsilon_2}{\epsilon_2}.
}

Then, the ergotropy can without loss of generality be written as
\mm{
W &= \tr[\rho H] - \min_U \tr[U\rho U\dg H] \notag\\
 &= \tr[\rho H] - [\epsilon_0 a + \epsilon_1(b+c) + \epsilon_2(b-c)]\notag\\
  &= \tr[\rho H] - [a \epsilon_0 + b(\epsilon_1 + \epsilon_2) + c(\epsilon_1 - \epsilon_2)],
}
where the energy eigenvalues are ordered such that the minimum is
achieved. Consequently, we get $\epsilon_1\le \epsilon_2$ since $(b+c) \ge (b-c)$. Therefore, $W$ increases with $c$ and we can set $\beta = 0$ in the above calculation.
Exploiting addition theorems, we can now write
\mm{
    \rho^S_{\alpha}
    =& \frac 16\left[ \ketbra 00 [1 + \cos(\alpha)] + \ketbra 11 \lb 1 +
\cos\lb\alpha - \frac{2\pi}3\rb \rb\nonumber \right. + \ketbra 22 \lb1 +
\cos\lb\alpha + \frac{2\pi}3\rb\rb\bigg]  \notag\\
    \rho^S_{\alpha+\pi}
    =& \frac 16\left[ \ketbra 00 (1 - \cos \alpha) + \ketbra 11 \lb 1 -
\cos\lb\alpha - \frac{2\pi}3\rb \rb \nonumber \right. + \ketbra 22 \lb1 - \cos\lb\alpha +
\frac{2\pi}3\rb\rb\bigg]. \label{labcond2eq}
    }

As one can easily see, an optimal value of $\alpha$ is not unique, as shifting its value by $\frac{2\pi}3$ can be compensated
by relabeling the states, which does not affect the daemonic gain. We now aim to find the optimal $\alpha$ in the interval
$[-\frac{\pi}3,\frac{\pi}3)$. When calculating the ergotropy of the conditional states we need to know the ordering of their
eigenvalues
\mm{
\alpha \in \left[-\frac{\pi}3,0\rb \Rightarrow \cos(\alpha) \ge \cos\lb \alpha +
\frac{2\pi}3\rb \ge \cos\lb \alpha - \frac{2\pi}3\rb \notag\\
\alpha \in \lb0,\frac{\pi}3\rb \Rightarrow \cos(\alpha) \ge \cos\lb \alpha - \frac{2\pi}3\rb \ge \cos\lb \alpha + \frac{2\pi}3\rb
}

In the following calculation, the upper sign will refer to the negative and the lower sign will refer to the positive interval
\mm{
\delta W(\rho_{SA},H,\mathbf \Pi) & = W_D(\rho_{SA},H,\mathbf \Pi) - W(\rho_S,H)
\notag \\
&= \tr[\rho_S H] - \min_{\mathbf \Pi} \sum_k \tr[\rho_SA (U_k\dg H U_k \otimes
\Pi_k)]- \left[\tr[\rho_S H] - \min_U \tr[\rho_S U\dg H U]\right]\notag \\
& = \min_U \tr[\rho_S U\dg H U] - \min_{\mathbf \Pi} \sum_k \tr[\rho_SA (U_k\dg H
U_k\otimes \Pi_k)] \notag \\
& = \max_{\alpha}\bigg\{ \frac 13(\epsilon_0 + \epsilon_1 + \epsilon_2) - \frac
16( \epsilon_0 [1 + \cos (\alpha)]) + \epsilon_1 \lb1 + \cos\lb\alpha \pm \frac{2\pi}3\rb\rb \nonumber\\
  &\quad + \epsilon_2 \lb 1 + \cos\lb\alpha \mp \frac{2\pi}3\rb \rb+ \epsilon_0 \lb 1 - \cos\lb\alpha \mp \frac{2\pi}3\rb \rb\nonumber\\
  &\quad + \epsilon_1 \lb1 - \cos\lb\alpha \pm \frac{2\pi}3\rb\rb+ \epsilon_2 (1 - \cos \alpha)\bigg\}\notag \\
&= \frac 16(\epsilon_2 - \epsilon_0) \max_{\alpha}\lb\cos(\alpha) - \cos\lb\alpha
\mp \frac{2\pi}3\rb\rb \nonumber \\ &= \frac{\epsilon_2 - \epsilon_0}{2\sqrt 3}.
}

Now, that we computed the maximal daemonic gain for projective measurements, we compare this with the daemonic gain that can be
achieved by using the POVM $M$, consisting of the effects $\frac 23 P_i$, as
defined in Equation~(\ref{eq-povm-pi}). In this case, the conditional states are
\mm{
\gamma^S_{P_0} = \frac 29 \lb \ketbra 00 + \frac 14 \ketbra 11 + \frac
14 \ketbra 22 \rb, \notag \\
\gamma^S_{P_1} = \frac 29 \lb \frac 14 \ketbra 00 +  \ketbra 11 + \frac
14 \ketbra 22 \rb, \notag \\
\gamma^S_{P_2} = \frac 29 \lb \frac 14 \ketbra 00 + \frac 14 \ketbra 11 +
\ketbra 22 \rb.
}

Given the conditional states, we can now compute the daemonic gain as
\begin{equation}
\begin{aligned}
\delta W =& \epsilon_0 \left(\frac 13 - \frac 23\right) + \epsilon_1 \left(\frac 13 - \frac 16\right) + \epsilon_2 \left(\frac 13 - \frac 16\right) \\
=& -\frac 13 \epsilon_0 + \frac 16 (\epsilon_1 + \epsilon_2).
\end{aligned}
\end{equation}

Choosing the Hamiltonian $H = \ketbra{\epsilon_1}{\epsilon_1} + \ketbra{\epsilon_2}{\epsilon_2}$ provides an example where
the maximal daemonic gain can not be achieved by using projective measurements because
\mm{
\delta W_{\text{proj}} = \frac 1{2\sqrt3} < \delta W_{M} = \frac 13.}
   \section{\label{labbadfixapp} Non-Optimal Convergence of the See-Saw Algorithm}
In the following, we construct a case in which Algorithm \ref{algorithm} will yield a sequence
 of values for the daemonic ergotropy that does not
converge against the maximal daemonic ergotropy. 
\label{lab:lfix}
Consider a state $\rho^{SA}$ on a system $S$ with a Hamiltonian $H$ and a
$d$-dimensional ancilla $A$, such that the optimal measurements are rank-one
projective measurements as long as only $d$-outcome measurements are considered. Then, there exists an
initialisation of Algorithm \ref{algorithm}, such that the
sequence of daemonic ergotropies generated by the algorithm limits in the
maximal daemonic ergotropy for $d$-outcome measurements.
   In order to see this, consider a measurement $\Pi$ that is optimal among $d$-outcome measurements. For
the effects $\{\Pi_1,\ldots, \Pi_d\}$ one finds $d$ optimal unitaries $\{V_1, \ldots,
V_{d}\}$. We now
initialise the algorithm for $d^2$ outcomes in the following~way
\mm{
U_i &= V_i, \,\,\,\, \quad i = 1, ..., d-1\nonumber\\
U_i &= V_{d}, \quad i = d, ..., d^2. \label{eq-b1}
}

This implies $\tau_{d} = \tau_{d + 1} = \ldots = \tau_{d^2}$, where
$\tau_i~=~\tr_S(\rho^{SA} U_i\dg H U_i)$. Hence, the
objective of
step~2 of the algorithm simplifies to
\mm{
&\min_M \sum_{i=1}^{d^2} \tr(\tau_i M_i) 
= \min_M \left[\sum_{i=1}^{d-1} \tr(\tau_i M_i){+}\tr\left(\tau_{d}
\sum_{j=d}^{d^2} M_j\right)\right].
}

The value of this expression thus depends on $d$ effects $M_1,\ldots,M_{d-1},
\sum_{j=d}^{d^2} M_j$. In this case, the minimum can by assumption only be achieved
if the effects are all rank-one. This implies that the first $d - 1$ effects are
orthogonal rank-one projectors and the remaining effects are rank-one operators
on the remaining one-dimensional subspace and sum up to a rank-one projector.
Thus, the algorithm again finds a $d$-outcome rank-one projective measurement
that is optimal among $d$-outcome measurements.
The case that was discussed above is of practical relevance, as we have observed
in numerical experiments
that randomly initialised unitaries may converge against the configuration
stated in Equation~(\ref{eq-b1}).

The example discussed in Appendix~\ref{labpovmbetterproj} meets the requirement
that all optimal
two-outcome measurements are rank-one projective measurements. The optimal projective
measurements are calculated in Appendix~\ref{labpovmbetterproj}. Any two outcome
measurement in two dimensions
with rank-two effects can be considered as a mixture of a rank-one projective
measurement with white noise. The only case, in which white noise will not
decrease the daemonic ergotropy is, if the conditional
states $\gamma^S_i$ [Equation~(\ref{labcond2eq})] are simultaneously
 diagonalisable by the same diagonalising unitary and with the same ordering of
eigenvalues in diagonal form. This is however not the case, since both states
are already diagonal but the eigenvalues are not in the same order.

In the same example, the maximum daemonic ergotropy cannot be achieved with
$d$-outcome measurements.

\section{Proof of Theorem~\ref{theoQC}}
\label{proofQC}

In this Appendix we provide a complete proof of the statement made in
Theorem~\ref{theoQC}, which we repeat here again for easiness of reading. 
For a quantum-classical state, that is a state that can be cast in the form\mm{
\rho^{SA}_{qc} = \sum_j \sigma^S_j \ot \ketbra jj^A
}
with a set of orthonormal vectors $\{\ket j^A\}$  and unnormalised states
$\sigma^S_j$ the following theorem holds. 

\noindent
{\bf Theorem 5.} \emph{For a quantum-classical state
$\rho^{SA}_{qc}$, the maximum daemonic ergotropy is
\mm{
\max_M W_D(\rho^{SA},H,M) = \sum_j W(\sigma_j^S,H).
}
This value is achieved by performing the projective measurements $P_j = \ketbra
jj^A$ on the ancilla $A$.}
\begin{proof}
The first claim follows directly from the second claim using Equation~(\ref{lab:daem2}).
Therefore, we prove the second claim by showing that the daemonic gain achieved through  any POVM $E$ with effects
$E_i$ and an arbitrary number of outcomes $N$ has an upper bound given by the value corresponding to the use of projective measurements. 
We start by computing the conditional states
\begin{equation}
\begin{aligned}
\gamma^S_k = \tr_A\left[{\rho^{SA} (\id \ot E_k)}\right]=  \sum_{j=1}^d \sigma^S_j \bra j E_k \ket j.
\end{aligned}
\end{equation}

It can be easily seen that post-processing can never increase the daemonic
ergotropy. This allows us to assume, without loss of generality, that all effects
are rank-one and use Naimark's extension theorem~\cite{Nielsen} to write
\begin{equation}
\gamma^S_k = \sum_{j=1}^N \sigma^S_j |\braket j{\phi_k}|^2,
\end{equation}
{where $(\ketbra{\phi_k}{\phi_k})_{k=1}^N$ is the Naimark extension 
of the operators $E_k$ on the extended ancilla space. Then,
$(\ket{\phi_k}))_{k=1}^N$ is an orthonormal basis in the extended ancilla space. We also extend 
$(\ket j)_{j=1}^d$, so $(\ket j)_{j=1}^N$ is another orthonormal basis in the
extended ancilla space and set  $\sigma^S_j = 0,~\forall j>d$. We can now interpret $|\braket j{\phi_k}|^2$ as entries of a
doubly stochastic matrix and apply the Birkhoff-von Neumann theorem~\cite{Lenard1978},
which allows us to express this doubly stochastic matrix as a convex combination
of permutation matrices $\pi^{(n)}=\left(\pi^{(n)}_{jk}\right)_{jk}$. This yields}
\begin{equation}
\gamma^S_k =\sum_{j=1}^{N} \sigma_j \sum_n p_n \pi^{(n)}_{jk}
\end{equation}
with probabilities $p_n$.

We insert this result into the formula of the daemonic ergotropy 
\mm{
W_D(\rho^{SA},H,M) = \tr(\rho^S H) - \sum_k \min_{U_k} \tr(U_k
\gamma^{S}_{k}U_k\dg  H).
}

As we are interested in the optimal measurement, our only
concern is the second term   
\begin{align}
&\sum_{k=1}^N \min_{U} \tr(U  \gamma^S_k U\dg H)\notag\\
= &\sum_{k=1}^N \min_{U} \tr(U \sum_{j=1}^N \sigma_j \sum_n p_n \pi^{(n)}_{jk} U\dg
H) \notag\\
\ge &\sum_{k,j,n} p_n \pi_{jk}^{(n)} \min_{U} \tr(U \sigma_j
U\dg H) \notag\\
=& \sum_n p_n  \sum_j \min_{U}\tr(U \sigma_j U\dg
H)\sum_k \pi_{jk}^{(n)}\notag\\
=&\sum_j \min_{U}\tr(U \sigma_j U\dg H),
\end{align}
which is bounded from below by the value that
is achieved for the projective measurement $P_j = \ketbra jj$, as stated above.
\end{proof}

\reftitle{References}




\end{document}